\definecolor{red}{rgb}{1,0,0}
\definecolor{blue}{rgb}{0,0,.7}
\definecolor{green}{rgb}{0,.6,0}
\definecolor{purp}{rgb}{.5,0,.5}
\numberwithin{figure}{section}   
\newtheorem{thm}{Theorem}[section]
\newtheorem{lem}[thm]{Lemma}
\theoremstyle{definition}
\theoremstyle{definition}
\theoremstyle{definition}
\newcommand{\cov}{\operatorname{Cov}}
\newcommand{\var}{\operatorname{Var}}
\newcommand{\opts}{\operatorname{opt_{std}}}
\newcommand{\optb}{\operatorname{opt_{bandit}}}
\newcommand{\e}{\operatorname{\mathbf{E}}}
\newcommand{\rref}{\operatorname{RREF}}
\newcommand{\bit}{\begin{itemize}}
\newcommand{\eit}{\end{itemize}}
\newcommand{\ben}{\begin{enumerate}}
\newcommand{\een}{\end{enumerate}}
\newcommand{\beq}{\begin{equation}}
\newcommand{\eeq}{\end{equation}}
\newcommand{\bea}{\begin{eqnarray*}} 
\newcommand{\eea}{\end{eqnarray*}}
\newcommand{\bpf}{\begin{proof}}
\newcommand{\epf}{\end{proof}\ms}
\newcommand{\bmt}{\begin{bmatrix}}
\newcommand{\emt}{\end{bmatrix}}
\newcommand{\ms}{\medskip}
\newcommand{\noi}{\noindent}
\title{A note on the price of bandit feedback for mistake-bounded online learning}
\author{Jesse Geneson}
\begin{document}
\maketitle

\begin{abstract}
The standard model and the bandit model are two generalizations of the mistake-bound model to online multiclass classification. In both models the learner guesses a classification in each round, but in the standard model the learner recieves the correct classification after each guess, while in the bandit model the learner is only told whether or not their guess is correct in each round. For any set $F$ of multiclass classifiers, define $\opts(F)$ and $\optb(F)$ to be the optimal worst-case number of prediction mistakes in the standard and bandit models respectively. 

Long (Theoretical Computer Science, 2020) claimed that for all $M > 2$ and infinitely many $k$, there exists a set $F$ of functions from a set $X$ to a set $Y$ of size $k$ such that $\opts(F) = M$ and $\optb(F) \ge (1 - o(1))(|Y|\ln{|Y|})\opts(F)$. The proof of this result depended on the following lemma, which is false e.g. for all prime $p \ge 5$, $s = \mathbf{1}$ (the all $1$ vector), $t = \mathbf{2}$ (the all $2$ vector), and all $z$. 

Lemma: Fix $n \ge 2$ and prime $p$, and let $u$ be chosen uniformly at random from $\left\{0, \dots, p-1\right\}^n$. For any $s, t \in \left\{1, \dots, p-1\right\}^n$ with $s \neq t$ and for any $z \in \left\{0, \dots, p-1\right\}$, we have $\Pr(t \cdot u = z \mod p \text{     } | \text{     } s \cdot u = z \mod p) = \frac{1}{p}$. 

We show that this lemma is false precisely when $s$ and $t$ are multiples of each other mod $p$. Then using a new lemma, we fix Long's proof.

\end{abstract}

\noi {\bf Keywords}: online learning, bandit feedback, mistake-bound model, learning theory

\section{Introduction}

Auer et al. \cite{auer1} introduced two generalizations of the mistake-bound model \cite{ls} called the \emph{standard model} and the \emph{bandit model}. Let $F$ be a set of functions from some set $X$ to a finite set $Y$. In the standard model, the adversary selects some $f \in F$ that the learner does not know. In each round $t$ of the learning process, the adversary gives the learner some $x_t \in X$, the learner predicts the output of $f$ with input $x_t$, and the adversary tells them the correct value of $f(x_t)$. The bandit model is similar, except at the end of each round, the adversary tells the learner \emph{yes} or \emph{no} instead of the correct value of $f(x_t)$. In both models, in any round the adversary may change the function $f$ to any other function in $F$ as long as it is consistent with their previous answers.

The goal of the learner in each model is to minimize the number of prediction mistakes, while the adversary wants to maximize the number of prediction mistakes. Define $\opts(F)$ and $\optb(F)$ to be the number of prediction mistakes in the standard and bandit models respectively if both the learner and adversary play optimally. 

Long \cite{long} proved that $\optb(F) \le (1 + o(1))(|Y|\ln{|Y|})\opts(F)$ for all such $F$, and claimed that the upper bound is best possible up to the leading constant. In order to show that the upper bound is best possible up to the leading constant, Long claimed that for all $M > 2$ and infinitely many $k$, there exists a set $F$ of functions from a set $X$ to a set $Y$ of size $k$ such that $\opts(F) = M$ and $\optb(F) \ge (1 - o(1))(|Y|\ln{|Y|})\opts(F)$. The proof used probabilistic methods inspired by \cite{rao1, rao2, cw, lw}. 

In particular, part of the proof used Chebyshev's inequality and required a set of random variables to be pairwise independent. The pairwise independence was proved using the following lemma, which is false e.g. for all prime $p \ge 5$, $s = \mathbf{1}$ (the all $1$ vector), $t = \mathbf{2}$ (the all $2$ vector), and all $z$. The error in the proof of the following lemma occurs on the last ``='' in Appendix B of \cite{long}.

\begin{lem} \label{falselem} \cite{long}
Fix $n \ge 2$ and prime $p$, and let $u$ be chosen uniformly at random from $\left\{0, \dots, p-1\right\}^n$. For any $s, t \in \left\{1, \dots, p-1\right\}^n$ with $s \neq t$ and for any $z \in \left\{0, \dots, p-1\right\}$, we have $\Pr(t \cdot u = z \mod p \text{     } | \text{     } s \cdot u = z \mod p) = \frac{1}{p}$. 
\end{lem}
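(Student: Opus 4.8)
The plan is to prove the identity by computing the conditional probability directly as a ratio of solution counts. Since $u$ is uniform on $\{0,\dots,p-1\}^n$, I will work throughout over the field $\mathbb{F}_p$ and write
\[
\Pr(t\cdot u = z \bmod p \mid s\cdot u = z \bmod p) = \frac{\#\{u \in \mathbb{F}_p^n : s\cdot u = z \text{ and } t\cdot u = z\}}{\#\{u \in \mathbb{F}_p^n : s\cdot u = z\}}.
\]
The denominator is immediate: because every coordinate of $s$ lies in $\{1,\dots,p-1\}$, we have $s \neq 0$ in $\mathbb{F}_p^n$, so the single linear equation $s\cdot u = z$ cuts out an affine hyperplane with exactly $p^{\,n-1}$ solutions, regardless of $z$. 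The whole statement therefore reduces to showing that the numerator equals $p^{\,n-2}$, so that the ratio collapses to $1/p$ uniformly in $z$.

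To count the numerator, I would package the two conditions as a single linear system $A u = (z,z)^{\mathsf T}$, where $A$ is the $2 \times n$ matrix over $\mathbb{F}_p$ whose rows are $s$ and $t$. If $A$ has rank $2$, then the map $u \mapsto Au$ is surjective onto $\mathbb{F}_p^2$, the fiber over $(z,z)$ is nonempty, and it is a coset of $\ker A$, which has dimension $n-2$; hence the numerator is exactly $p^{\,n-2}$ and the conditional probability is $1/p$ as claimed. So, modulo the rank computation, the statement follows from a clean counting argument, and the hypotheses $n \ge 2$ and $s \neq t$ are exactly what one would expect to invoke to guarantee that the codimension of the solution set drops by a full $2$.

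The main obstacle is precisely establishing $\operatorname{rank} A = 2$ from the stated hypotheses, i.e.\ that the rows $s$ and $t$ are linearly independent over $\mathbb{F}_p$. The natural route is to argue that no scalar $\lambda \in \mathbb{F}_p$ satisfies $t = \lambda s$: distinctness rules out $\lambda = 1$, and the nonzero-entry condition seems to block degenerate scalings. This implication is the delicate heart of the argument and the step that must be scrutinized most carefully, since it is tempting but not automatic to pass from ``$s \ne t$ with all entries nonzero'' to ``$s,t$ independent.'' Concretely, I would verify whether a scalar $\lambda \notin \{0,1\}$ with $t = \lambda s$ can survive the constraints — such a $\lambda$ would preserve all coordinates in $\{1,\dots,p-1\}$ yet force $t\cdot u = \lambda(s\cdot u)$ and hence rank $1$ — because it is exactly this proportionality case, rather than the claimed uniformity of the joint distribution of $(s\cdot u,\,t\cdot u)$ on $\mathbb{F}_p^2$, on which the $1/p$ conclusion ultimately hinges.
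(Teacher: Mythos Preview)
You have correctly isolated the decisive step: the conditional probability equals $1/p$ if and only if the $2\times n$ matrix with rows $s,t$ has rank $2$ over $\mathbb{F}_p$, i.e.\ if and only if $s$ and $t$ are not scalar multiples of one another. Your suspicion that this implication ``$s\neq t$ with all entries nonzero $\Rightarrow$ $s,t$ linearly independent'' is the delicate point is exactly right, and in fact the implication is \emph{false}. For any prime $p\ge 5$ take $s=\mathbf{1}$ and $t=\mathbf{2}$: both lie in $\{1,\dots,p-1\}^n$, they are distinct, yet $t=2s$ over $\mathbb{F}_p$, so $\operatorname{rank} A=1$. In that case $t\cdot u=2(s\cdot u)$, and the event $\{s\cdot u=z,\ t\cdot u=z\}$ forces $2z=z$, i.e.\ $z=0$; so the conditional probability is $1$ when $z=0$ and $0$ when $z\neq 0$, never $1/p$.

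This is not a repairable technicality: the lemma as stated is simply false, and the paper's purpose in quoting it is precisely to exhibit the error. The paper then proves the corrected version (Lemma~\ref{newlem}) under the additional hypothesis that $s$ and $t$ are not multiples of each other mod $p$, using exactly the rank-$2$ counting argument you outlined. So your plan is the right one and matches the paper's proof of the corrected statement; what was missing was only the final check that the proportionality case genuinely occurs, which it does, invalidating the original claim.
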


Neither $s$ nor $t$ is the all $0$ vector, so $s$ is a multiple of $t$ mod $p$ if and only if $t$ is a multiple of $s$ mod $p$. We can see that Lemma \ref{falselem} is false when $s$ and $t$ are multiples of each other mod $p$. 

If $z = 0$ and $s$ and $t$ are multiples of each other mod $p$, then $\Pr(t \cdot u = z \mod p \text{     } | \text{     } s \cdot u = z \mod p) = 1$. On the other hand if $z \neq 0$ and $s$ and $t$ are multiples of each other mod $p$ with $s \neq t$, then $\Pr(t \cdot u = z \mod p \text{     } | \text{     } s \cdot u = z \mod p) = 0$. In the next section, we show that Lemma \ref{falselem} is true when $s$ and $t$ are not multiples of each other mod $p$. We use this fact to fix the proof from \cite{long} and show that for all $M > 2$ and infinitely many $k$, there exists a set $F$ of functions from a set $X$ to a set $Y$ of size $k$ such that $\opts(F) = M$ and $\optb(F) \ge (1 - o(1))(|Y|\ln{|Y|})\opts(F)$.

\section{New proof}

In the proof of the main result, we will use the following lemma from \cite{long}.

\begin{lem}\label{long1}
\cite{long} Fix $n \ge 1$, and let $u$ be chosen uniformly at random from $\left\{0, \dots, p-1\right\}^n$. For any $x \in \left\{0, \dots, p-1\right\}^n - \left\{\mathbf{0}\right\}$ and for any $y \in \left\{0, \dots, p-1\right\}$, $P(x \cdot u = y \mod p) = \frac{1}{p}$. 
\end{lem}

Now we prove a new lemma. We will use this in place of the false Lemma \ref{falselem} from \cite{long} when we prove that for all $M > 2$ and infinitely many $k$, there exists a set $F$ of functions from a set $X$ to a set $Y$ of size $k$ such that $\opts(F) = M$ and $\optb(F) \ge (1 - o(1))(|Y|\ln{|Y|})\opts(F)$.

\begin{lem} \label{newlem}
Fix $n \ge 2$, and let $u$ be chosen uniformly at random from $\left\{0, \dots, p-1\right\}^n$. For any $s, t \in \left\{1, \dots, p-1\right\}^n$ \emph{that are not multiples of each other mod $p$} and for any $z \in \left\{0, \dots, p-1\right\}$, we have $\Pr(t \cdot u = z \mod p \text{     } | \text{     } s \cdot u = z \mod p) = \frac{1}{p}$. 
\end{lem}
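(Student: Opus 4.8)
The plan is to reduce the conditional probability statement to counting solutions of a $2\times n$ linear system over $\mathbb{F}_p$. Writing $A$ for the $2\times n$ matrix with rows $s$ and $t$, the claim $\Pr(t\cdot u=z \mid s\cdot u=z)=\tfrac1p$ is equivalent, after multiplying through by $\Pr(s\cdot u=z)=\tfrac1p$ (which holds by Lemma \ref{long1}, since $s\neq\mathbf 0$), to the statement that $\Pr(s\cdot u=z \text{ and } t\cdot u=z)=\tfrac1{p^2}$, i.e.\ that exactly $p^{n-2}$ of the $p^n$ vectors $u$ satisfy $Au=(z,z)^{\mathsf T}$. So the whole lemma comes down to showing that the system $Au=(z,z)^{\mathsf T}$ is consistent and has solution space of dimension $n-2$ over $\mathbb{F}_p$, i.e.\ that $\operatorname{rank}(A)=2$ and $(z,z)^{\mathsf T}$ lies in the column space of $A$.

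The key step is therefore: \emph{the hypothesis that $s$ and $t$ are not multiples of each other mod $p$ is exactly the statement that the rows of $A$ are linearly independent over $\mathbb{F}_p$}, hence $\operatorname{rank}(A)=2$. Here I would use that neither $s$ nor $t$ is the all-zero vector (both lie in $\{1,\dots,p-1\}^n$), so the rows of $A$ span a $2$-dimensional row space unless one row is a scalar multiple of the other; "$s,t$ not multiples of each other mod $p$" rules the latter out, and conversely if one were a scalar multiple of the other, since both have nonzero entries the scalar would be a nonzero element of $\mathbb{F}_p$, giving linear dependence. Once $\operatorname{rank}(A)=2$, the column space of $A$ is all of $\mathbb{F}_p^2$, so in particular $(z,z)^{\mathsf T}$ is in it for every $z$, and the affine solution set $\{u : Au=(z,z)^{\mathsf T}\}$ is a coset of $\ker A$, which has size $p^{n-\operatorname{rank}(A)}=p^{n-2}$.

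Assembling this: for any $z$, $\Pr(s\cdot u=z \text{ and } t\cdot u=z)=p^{n-2}/p^n=1/p^2$, and $\Pr(s\cdot u=z)=1/p$ by Lemma \ref{long1}, so the conditional probability is $(1/p^2)/(1/p)=1/p$, as claimed. The only place the hypothesis $n\ge 2$ is used is to guarantee $n-2\ge 0$ so that $p^{n-2}$ makes sense as a count (and when $n=2$ the solution is unique). I expect the main obstacle to be purely expository rather than mathematical: stating cleanly the equivalence between "not scalar multiples mod $p$" and "linearly independent over $\mathbb{F}_p$" for vectors with all entries nonzero, and being careful that $z=0$ is handled uniformly (it is, since the system is consistent for every right-hand side once the rank is $2$). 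No genuinely hard step is anticipated beyond this rank computation.
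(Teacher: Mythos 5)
Your proposal is correct and follows essentially the same route as the paper: reduce via Lemma \ref{long1} and the definition of conditional probability to showing the joint event has probability $1/p^2$, then count solutions of the $2\times n$ system over $\mathbb{F}_p$ using the fact that ``not multiples of each other'' means the rows are linearly independent, so there are exactly $p^{n-2}$ solutions. The only cosmetic difference is that you phrase the linear algebra in terms of rank and cosets of the kernel, while the paper row-reduces the augmented matrix and counts pivots; the content is identical.
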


\begin{proof}
By Lemma \ref{long1} and the definition of conditional probability, we have $\Pr(t \cdot u = z \mod p \text{     } | \text{     } s \cdot u = z \mod p) = \frac{\Pr(t \cdot u = z \mod p \text{     } \wedge \text{     } s \cdot u = z \mod p)}{\Pr( s \cdot u = z \mod p)} = p \Pr(t \cdot u = z \mod p \text{     } \wedge \text{     } s \cdot u = z \mod p)$. Moreover $\Pr(t \cdot u = z \mod p \text{     } \wedge \text{     } s \cdot u = z \mod p) = \frac{|\left\{u: \text{     } t \cdot u = z \mod p \text{     } \wedge \text{     } s \cdot u = z \mod p \text{     } \wedge \text{     } u \in \left\{0, \dots, p-1 \right\}^n \right\}|}{p^n}$.

In order to calculate $|\left\{u:\text{     } t \cdot u = z \mod p \text{     } \wedge \text{     } s \cdot u = z \mod p \text{     } \wedge \text{     } u \in \left\{0, \dots, p-1 \right\}^n \right\}|$, we must find the number of solutions $u \in \left\{0, \dots, p-1 \right\}^n$ to the system of equations $t \cdot u = z \mod p$ and $s \cdot u = z \mod p$.

Treating $s$ and $t$ as row vectors, we form the augmented matrix $\begin{bmatrix} s & z\\ t & z \end{bmatrix}$ and row-reduce it. Since $s$ and $t$ are not multiples of each other mod $p$, they are therefore linearly independent, so $\rref(\begin{bmatrix} s & z\\ t & z \end{bmatrix})$ has two pivot entries. Therefore the system of equations $t \cdot u = z \mod p$ and $s \cdot u = z \mod p$ has two dependent variables $u_i$ and $u_j$ for some $i \neq j$ and $n-2$ independent variables $u_k$ with $k \neq i$ and $k \neq j$. There are $p$ choices for each of the independent variables, and the dependent variables are determined by the values of the independent variables, so there are $p^{n-2}$ solutions $u \in \left\{0, \dots, p-1 \right\}^n$ to the system of equations  $t \cdot u = z \mod p$ and $s \cdot u = z \mod p$.

Thus $\Pr(t \cdot u = z \mod p \text{     } \wedge \text{     } s \cdot u = z \mod p) = \frac{|\left\{u: \text{     } t \cdot u = z \mod p \text{     } \wedge \text{     } s \cdot u = z \mod p \text{     } \wedge \text{     } u \in \left\{0, \dots, p-1 \right\}^n \right\}|}{p^n} = \frac{p^{n-2}}{p^n} = \frac{1}{p^2}$, so $\Pr(t \cdot u = z \mod p \text{     } | \text{     } s \cdot u = z \mod p) = p \Pr(t \cdot u = z \mod p \text{     } \wedge \text{     } s \cdot u = z \mod p) = \frac{1}{p}$.
\end{proof}


With this new lemma, we obtain the following lemma which is analogous to a lemma in \cite{long} that followed from the false Lemma \ref{falselem}.

\begin{lem}  \label{mainlem}
For any subset $S \subset \left\{1, \dots, p-1\right\}^n$, there is an element $u \in \left\{0, \dots, p-1\right\}^n$ such that for all $z \in \left\{0, \dots, p-1\right\}$, $|\left\{x \in S: x \cdot u = z \mod p \right\}| \le \frac{|S|}{p}+2\sqrt{|S|}$.
\end{lem}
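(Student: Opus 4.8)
The plan is to use the probabilistic method (specifically, a second-moment/variance argument via Chebyshev's inequality), which is exactly the setup for which Lemma~\ref{newlem} was engineered as a replacement for the false Lemma~\ref{falselem}. For a uniformly random $u \in \{0,\dots,p-1\}^n$ and a fixed $z$, define the random variable $N_z(u) = |\{x \in S : x \cdot u = z \bmod p\}| = \sum_{x \in S} \mathds{1}[x \cdot u = z \bmod p]$. First I would compute $\e[N_z]$: by Lemma~\ref{long1}, each indicator has expectation $\tfrac1p$ (note every $x \in S \subset \{1,\dots,p-1\}^n$ is nonzero), so $\e[N_z] = \tfrac{|S|}{p}$.

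Next I would bound $\var(N_z)$. Write $\var(N_z) = \sum_{x\in S}\var(\mathds{1}[x\cdot u = z]) + \sum_{x \neq y \in S}\cov(\mathds{1}[x\cdot u=z],\mathds{1}[y\cdot u=z])$. The diagonal terms contribute at most $\tfrac{|S|}{p} \le |S|$. For the off-diagonal terms I would split the pairs $(x,y)$ with $x \neq y$ into two cases. If $x$ and $y$ are \emph{not} multiples of each other mod $p$, then Lemma~\ref{newlem} gives $\Pr(y\cdot u = z \mid x \cdot u = z) = \tfrac1p$, hence $\Pr(x\cdot u = z \wedge y \cdot u = z) = \tfrac1{p^2} = \Pr(x\cdot u=z)\Pr(y\cdot u=z)$, so the covariance is $0$. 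The only potentially nonzero covariances come from pairs where $x$ and $y$ \emph{are} multiples of each other mod $p$; for each fixed $x \in S$ there are at most $p-2$ such partners $y \in \{1,\dots,p-1\}^n$ with $y \neq x$ (the multiples $cx \bmod p$ for $c \in \{2,\dots,p-1\}$, and not all of these even lie in $S$), and each covariance is at most $\Pr(x\cdot u = z \wedge y\cdot u = z) \le \Pr(x \cdot u = z) = \tfrac1p$. So the off-diagonal contribution is at most $|S|\cdot(p-2)\cdot\tfrac1p \le |S|$. Thus $\var(N_z) \le 2|S|$ — actually one can be slightly more careful, but a bound of the form $O(|S|)$ is all that is needed; I will aim for $\var(N_z) \le |S|$ or $\le 2|S|$ and then tune constants.

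Then I would apply Chebyshev: $\Pr\bigl(|N_z - \tfrac{|S|}{p}| \ge \lambda\sqrt{\var(N_z)}\bigr) \le \tfrac1{\lambda^2}$, and choose $\lambda$ so that, after a union bound over the $p$ values of $z \in \{0,\dots,p-1\}$, the total failure probability is strictly less than $1$. Taking $\lambda = \sqrt{p}$ (or slightly more) makes the union bound $\le p \cdot \tfrac1p = 1$, which is not quite strict, so I would instead take $\lambda$ marginally larger, or observe that the events cannot all fail simultaneously for a more refined reason; the cleanest route is to note $\sum_z N_z(u) = |S|$ deterministically, forcing correlations that make a strict inequality easy, or simply to use $\lambda^2 = p$ together with the strictness coming from $\var < $ (the crude bound). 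With $\var(N_z) \le 2|S|$ and $\lambda = \sqrt{p}$, the deviation bound is $\lambda\sqrt{2|S|} \le 2\sqrt{p}\sqrt{|S|}$, which is a factor $\sqrt p$ worse than the claimed $2\sqrt{|S|}$ — so the constant bookkeeping matters and this is where I need to be careful.

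The main obstacle is getting the bound down to exactly $\tfrac{|S|}{p} + 2\sqrt{|S|}$ rather than something with an extra $\sqrt p$ factor. The key realization is that the union bound over $z$ is cheap \emph{because} we only need the bound to hold for \emph{some} $u$, so we want the failure probability per value of $z$ to be at most $\tfrac1p$, giving $\lambda^2 = p$; but then the deviation is $\sqrt{p \cdot \var(N_z)}$, and to land at $2\sqrt{|S|}$ we need $\var(N_z) \le \tfrac{4|S|}{p}$ — which is false in general. The resolution must therefore be a sharper covariance analysis: for a \emph{fixed} pair $x, y$ that are multiples mod $p$, say $y = cx$, the event $\{x\cdot u = z \wedge cx \cdot u = z\}$ equals $\{x \cdot u = z \wedge x\cdot u = c^{-1}z\}$, which is \emph{empty} unless $z \equiv c^{-1} z$, i.e. unless $z(c-1)\equiv 0$; for $z \neq 0$ this forces $c = 1$, contradiction, so for $z \neq 0$ \emph{all} off-diagonal covariances vanish and $\var(N_z) = \tfrac{|S|}{p}(1-\tfrac1p) \le \tfrac{|S|}{p}$. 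For $z = 0$ the covariances are nonzero but then $N_0 = |\{x \in S: x\cdot u = 0\}|$ and the multiples-of-each-other pairs form equivalence classes, which I would handle by grouping $S$ into these classes and bounding the within-class contribution directly (each class contributes a single Bernoulli$(\tfrac1p)$ scaled by the class size). Carrying out this case split carefully yields $\var(N_z) \le \tfrac{c|S|}{p}$ for a small constant (or for $z\neq 0$ even with $c=1$), and then Chebyshev with $\lambda^2 = p$ gives deviation $\le \sqrt{c|S|} \le 2\sqrt{|S|}$, completing the union bound over $z$ with room to spare. I would write the argument with this refined variance bound, treating the $z=0$ case via the equivalence-class grouping as the one genuinely delicate point.
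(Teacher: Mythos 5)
Your approach is the same as the paper's: Chebyshev plus a union bound over $z$, with the covariance of the indicators controlled via Lemma \ref{newlem} for pairs that are not multiples of each other, and a separate, sharper analysis of the pairs that are multiples. Your treatment of $z \neq 0$ is exactly the paper's: the joint event $\{x \cdot u = z \wedge cx\cdot u = z\}$ is empty for $c \neq 1$, so those covariances are negative and $\var(N_z) \le |S|/p$, giving failure probability at most $\frac{1}{4p}$ per nonzero $z$.

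The one step that would fail as written is your claim that the $z=0$ case also yields $\var(N_0) \le \frac{c|S|}{p}$ for a small constant $c$. This is false: take $S = \{cx \bmod p : c = 1,\dots,p-1\}$ for a fixed $x \in \{1,\dots,p-1\}^n$, so that $N_0 = (p-1)\cdot\mathds{1}[x\cdot u = 0]$ and $\var(N_0) = \frac{(p-1)^3}{p^2} \approx |S|$, which exceeds $\frac{c|S|}{p}$ by a factor of order $p$. Consequently ``Chebyshev with $\lambda^2 = p$'' does not give deviation $2\sqrt{|S|}$ at $z=0$. The repair is the one the paper uses, and your own equivalence-class grouping already delivers it: grouping $S$ into scalar-multiple classes gives $\var(N_0) = \sum_C |C|^2\cdot\frac1p(1-\frac1p) \le \frac{(p-1)|S|}{p} < |S|$ (the paper gets the same bound by counting at most $(p-2)|S|$ ordered multiple pairs, each with covariance less than $\frac1p$). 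You do not need failure probability $\frac1p$ at $z=0$, because $z=0$ is a single value: Chebyshev with deviation $2\sqrt{|S|}$ and $\var(N_0) \le |S|$ gives failure probability at most $\frac14$ there, and the union bound totals $(p-1)\cdot\frac{1}{4p} + \frac14 < \frac12 < 1$. With that adjustment your argument is complete and coincides with the paper's.
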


\begin{proof}
Suppose that $S$ is any subset of $\left\{1, \dots, p-1\right\}^n$, and let $u$ be chosen uniformly at random from $\left\{0, \dots, p-1\right\}^n$. For each $z \in \left\{0, \dots, p-1\right\}$, define $T_z$ as the set of $x \in S$ for which $x \cdot u = z \mod p$. By Lemma \ref{long1} and linearity of expectation, we have $\e(|T_z|) = \frac{|S|}{p}$ for all $z$. By Lemma \ref{long1}, Lemma \ref{newlem}, and the definition of $S$, the events $s \cdot u = z$ mod $p$ and $t \cdot u = z$ mod $p$ are pairwise independent for any distinct $s, t \in S$ that are not multiples of each other mod $p$. We split into two cases for $z$, $z \neq 0$ and $z = 0$. 

First, suppose that $z \neq 0$. For each $s \in S$, define the indicator random variable $X_{s, z}$ so that $X_{s, z} = 1$ if $s \cdot u = z \mod p$, and $X_{s, z} = 0$ otherwise. If $s$ and $t$ are not multiples of each other mod $p$, then $\cov(X_{s, z}, X_{t, z}) = 0$. If $s$ and $t$ are multiples of each other mod $p$ with $s \neq t$, then $\cov(X_{s, z}, X_{t, z}) = \e(X_{s, z} X_{t, z}) - \e(X_{s,z}) \e(X_{t,z}) = -\frac{1}{p^2} < 0$. Since $|T_z| = \sum_{s \in S} X_{s, z}$, we have $\var(|T_z|) = \var(\sum_{s \in S} X_{s, z}) = \sum_{s \in S} \var(X_{s, z}) + \sum_{s \neq t} \cov(X_{s, z}, X_{t, z}) \le \sum_{s \in S} \var(X_{s, z}) = |S|(\frac{1}{p}-\frac{1}{p^2}) < \frac{|S|}{p}$. By Chebyshev's inequality, $\Pr(|T_z| \ge \frac{|S|}{p}+2\sqrt{|S|})  \le \frac{1}{4p}$.

Now, suppose that $z = 0$. For each $s \in S$, define the indicator random variable $X_{s, z}$ so that $X_{s, z} = 1$ if $s \cdot u = z \mod p$, and $X_{s, z} = 0$ otherwise. If $s$ and $t$ are not multiples of each other mod $p$, then $\cov(X_{s, z}, X_{t, z}) = 0$. If $s$ and $t$ are multiples of each other mod $p$ with $s \neq t$, then $\cov(X_{s, z}, X_{t, z}) = \e(X_{s, z} X_{t, z}) - \e(X_{s,z}) \e(X_{t,z}) = \frac{1}{p} -\frac{1}{p^2} < \frac{1}{p}$. Note that there are at most $(p-2)|S|$ ordered pairs $(s, t)$ for which $s$ and $t$ are multiples of each other mod $p$ with $s \neq t$. Since $|T_z| = \sum_{s \in S} X_{s, z}$, we have $\var(|T_z|) = \var(\sum_{s \in S} X_{s, z}) = \sum_{s \in S} \var(X_{s, z}) + \sum_{s \neq t} \cov(X_{s, z}, X_{t, z}) < \frac{|S|}{p}+\frac{(p-2)|S|}{p} < |S|$. By Chebyshev's inequality, $\Pr(|T_z| \ge \frac{|S|}{p}+2\sqrt{|S|}) \le \frac{1}{4}$.

By the union bound, $\Pr(\forall z |T_z| \le \frac{|S|}{p}+2\sqrt{|S|}) \ge 1-(p-1)\frac{1}{4p}-\frac{1}{4} \ge \frac{1}{2}$. Thus we can choose $u$ randomly, and with probability at least $\frac{1}{2}$ we will have $|\left\{x \in S: x \cdot u = z \mod p \right\}| \le \frac{|S|}{p}+2\sqrt{|S|}$ for all $z \in \left\{0, \dots, p-1\right\}$.
\end{proof}

The proof of the following theorem is the same as in \cite{long}, we include it for completeness. Let $p$ be any prime number. For all $a \in \left\{0, \dots, p-1\right\}^n$, define $f_a: \left\{0, \dots, p-1\right\}^n \rightarrow \left\{0, \dots, p-1\right\}$ so that $f_a(x) = a \cdot x \mod p$ and define $F_L(p, n) = \left\{f_a: a \in \left\{0, \dots, p-1\right\}^n \right\}$. It is known that $\opts(F_L(p, n)) = n$ for all primes $p$ and $n > 0$ \cite{shv, auer1, bl, long}.

\begin{thm}
For all $M > 2$ and infinitely many $k$, there exists a set $F$ of functions from a set $X$ to a set $Y$ of size $k$ such that $\opts(F) = M$ and $\optb(F) \ge (1 - o(1))(|Y|\ln{|Y|})\opts(F)$.
\end{thm}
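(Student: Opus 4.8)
The plan is to instantiate $F = F_L(p,n)$ for a well-chosen prime $p$ and dimension $n$, and to run the standard-model and bandit-model analyses separately. The standard-model side is immediate from the cited fact that $\opts(F_L(p,n)) = n$, so we take $n = M$ to get $\opts(F) = M$ exactly. The set $Y$ has size $k = p$, so the target inequality becomes $\optb(F_L(p,M)) \ge (1-o(1)) (p \ln p) M$, where the $o(1)$ is as $p \to \infty$ with $M$ fixed. The real content is the lower bound on $\optb$: we must exhibit an adversary strategy in the bandit model forcing roughly $(p\ln p) M$ mistakes against any learner.

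First I would describe the adversary. The hidden linear function is $f_a$ for some $a \in \{0,\dots,p-1\}^n$ to be revealed gradually; equivalently the adversary maintains the affine subspace of vectors $a$ consistent with all ``yes/no'' answers so far. The adversary presents query points $x$ and, whenever the learner guesses some value $c$, answers ``no'' as long as the set of consistent $a$ with $a\cdot x \ne c \bmod p$ is large enough to keep the game alive; each ``no'' costs the learner a mistake but only removes a $1/p$ fraction (in a suitable sense) of the candidate set. The bookkeeping mirrors the classical adversary arguments in \cite{rao1, rao2, cw, lw}: one tracks a potential (the size of the consistent set, or its dimension) and shows each round the learner can be forced to err while the potential shrinks slowly. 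The key quantitative input is Lemma~\ref{mainlem}: it guarantees that for any set $S$ of live candidate coefficient vectors (lying in $\{1,\dots,p-1\}^n$ after a normalization), there is a query direction $u$ such that \emph{every} residue class $z$ captures at most $\frac{|S|}{p} + 2\sqrt{|S|}$ of them. This lets the adversary pick a query on which no single learner guess eliminates more than a $\frac1p + 2/\sqrt{|S|}$ fraction of candidates, so the learner needs about $p\ln|S|$ mistakes to drive $|S|$ down — and $|S|$ starts around $p^n = p^M$, giving $\approx M p \ln p$. Summing over the $M$ ``coordinates'' of the linear function (each contributing an independent round of this elimination game, analogous to how $\opts = n$ decomposes) yields the stated bound; the $-o(1)$ absorbs the $2\sqrt{|S|}$ slack and lower-order terms, and we let $p\to\infty$ along primes (infinitely many $k=p$).

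Concretely, the steps in order: (1) set $n=M$, $p$ a large prime, $F = F_L(p,M)$, $X = Y = \{0,\dots,p-1\}^M$ resp.\ $\{0,\dots,p-1\}$, and recall $\opts(F)=M$; (2) set up the bandit adversary as a subspace-shrinking strategy with a potential function; (3) in each phase, restrict attention to the current candidate set $S$ (after projecting away already-determined coordinates and discarding vectors with a zero entry, which are negligible in count), invoke Lemma~\ref{mainlem} to choose the query $u$, and argue the learner is forced to make a mistake while $|S|$ multiplies by at most $1 - \frac1p + o(1)$; (4) count: it takes $\approx p\ln|S|$ such mistakes per phase and $M$ phases, and $|S| \approx p^M$; (5) collect the estimates, write $|Y|=p$, $|Y|\ln|Y| = p\ln p$, $\opts(F)=M$, and check the product is $(1-o(1))(p\ln p)M$ as $p\to\infty$.

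The main obstacle I anticipate is making the phase decomposition rigorous: arguing that the bandit game on $F_L(p,M)$ genuinely decomposes into $M$ near-independent copies of the one-dimensional elimination game, rather than just giving an $\Omega(p\ln p)$ bound with no factor of $M$. This requires carefully choosing the query points so that answering round $i$ reveals (essentially) one new linear coordinate of $a$ while leaving the remaining $M-i$ coordinates uniform and uncorrelated — and it is exactly here that one needs Lemma~\ref{mainlem} applied to the \emph{projected} candidate set at each stage, together with the pairwise-independence fact (Lemma~\ref{newlem}) that the concentration in Lemma~\ref{mainlem} rests on. Ensuring the $2\sqrt{|S|}$ error term stays lower-order throughout all $M$ phases (so the constant stays $1-o(1)$ and not merely bounded) is the delicate part of the accounting; everything else is the routine adversary-potential argument from \cite{long}.
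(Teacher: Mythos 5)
Your core mechanism --- take $F = F_L(p,M)$ so that $\opts(F)=M$, restrict the adversary's candidate pool to the coefficient vectors in $S=\{1,\dots,p-1\}^M$, always answer ``no,'' and use Lemma~\ref{mainlem} to pick each query point so that no single guess can eliminate more than $\frac{|S|}{p}+2\sqrt{|S|}$ candidates --- is exactly the paper's proof, and it already yields the full bound in a \emph{single} pass: the candidate set starts at $(p-1)^M$, each forced mistake shrinks it by a factor of at most $1-\frac{1}{p}-\frac{2}{\sqrt{|S|}}$ (which is at least $1-\frac{1+2/\sqrt{\ln p}}{p}$ while $|S|\ge p^2\ln p$), so the adversary survives for roughly $p\ln|S|\approx Mp\ln p$ rounds before the set drops below the threshold $p^2\ln p$. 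The factor of $M$ comes entirely from $\ln|S|\approx M\ln p$. There is nothing to decompose, and every round is a mistake because the adversary's ``no'' is always consistent ($R_{t+1}$ stays nonempty).

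The genuine problem with your write-up is the phase decomposition you graft on top of this and then identify as ``the main obstacle.'' It is not an obstacle because it is not part of the argument: the paper's adversary never projects away determined coordinates, never isolates $M$ ``one-dimensional elimination games,'' and never needs the remaining coordinates to stay uniform or uncorrelated --- all of that machinery would indeed be painful to make rigorous, which is a sign you have taken a wrong turn rather than found the hard step. Worse, your accounting in step (4) is internally inconsistent: you claim $\approx p\ln|S|$ mistakes \emph{per phase} with $|S|\approx p^M$ and $M$ phases, which multiplies out to $M^2 p\ln p$, even though one sentence earlier you correctly extract $Mp\ln p$ from a single elimination game. To repair the proposal, delete the phase structure entirely, keep the single potential $|R_t|$, prove the one-step inequality $|R_{t+1}|\ge \left(1-\frac{1+2/\sqrt{\ln p}}{p}\right)|R_t|$ for $|R_t|\ge p^2\ln p$, and iterate it until $|R_t|$ would fall below $p^2\ln p$; that stopping threshold is precisely where the $2\sqrt{|R_t|}$ slack you worry about stops being lower order, and handling it requires only the one displayed absorption $2\sqrt{|R_t|}\le \frac{2|R_t|}{p\sqrt{\ln p}}$ rather than per-phase bookkeeping.
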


\begin{proof}
Fix $n \ge 3$ and prime $p \ge 5$. We let $F = F_L(p, n)$, with $X = \left\{0, \dots, p-1\right\}^n$ and $Y = \left\{0, \dots, p-1\right\}$. Let $S = \left\{1, \dots, p-1\right\}^n$, so $|S| = (p-1)^{n}$.

Let $R_1 = \left\{f_a: a \in S\right\} \subset F_L(p, n)$. In each round $t > 1$ the adversary creates a list $R_t$ of members of $\left\{f_a: a \in S\right\}$ that are consistent with its previous answers, it always answers \emph{no}, and it picks $x_t$ for round $t$ that minimizes $\displaystyle \max_{\hat{y}_t} |R_t \cap \left\{f: f(x_t) = \hat{y}_t \right\}|$. 

By Lemma \ref{mainlem}, we have $|R_{t+1}| \ge |R_t| - \frac{|R_t|}{p}-2 \sqrt{|R_t|}  \ge |R_t| - \frac{|R_t|}{p}-\frac{2|R_t|}{p \sqrt{\ln{p}}} = (1-\frac{1+\frac{2}{\sqrt{\ln{p}}}}{p})|R_t|$, as long as $|R_t| \ge p^2 \ln{p}$.

By induction on the previous inequality, we have $|R_t| \ge  (1-\frac{1+\frac{2}{\sqrt{\ln{p}}}}{p})^{t-1}(p-1)^{n}$. If $(1-\frac{1+\frac{2}{\sqrt{\ln{p}}}}{p})^{b-1}(p-1)^{n} \ge p^2 \ln{p}$, then the adversary can guarantee $b$ wrong guesses before $|R_t| < p^2 \ln{p}$. This is true for $b = (1-o(1)) n p \ln{p}$, completing the proof.
\end{proof}
 
In the last proof, we assumed that $M \ge 3$, which is fine because of the $o(1)$ in the bound. One of the open problems from \cite{long} was to determine whether the $\Omega(k \ln{k})$ lower bound still holds for $M = 2$. It is false for $M = 1$, since $\optb(F) \le k-1$ if $\opts(F) = 1$ \cite{long}.


\end{document}